\newtheorem{theorem}{Theorem}
\newcommand{\citex}{\textsc{Citex}}
\newcommand{\cA}{\mathcal{A}}
\newcommand{\cP}{\mathcal{P}}
\newcommand{\xx}{\mathbf{x}}
\newcommand{\yy}{\mathbf{y}}
\begin{document}
\title{{\citex}: A new citation index to measure the relative importance of authors and papers in scientific publications}
\author{Arindam Pal $\dag$ and Sushmita Ruj $\ddag$\\\\
$\dag$ TCS Innovation Labs,
Kolkata, India.
Email address: arindamp@gmail.com\\
$\ddag$ Indian Statistical Institute,
Kolkata, India.
Email address: sush@isical.ac.in}
\begin{comment}
%\author{\textsf{Arindam Pal and Sushmita Ruj}}
\author{Arindam Pal
  \thanks{Email address: \texttt{arindamp@gmail.com}}}
\affil{TCS Innovation Labs, Kolkata, India}

\author{Sushmita Ruj
  \thanks{Email address: \texttt{sush@isical.ac.in}}}
\affil{Indian Statistical Institute, Kolkata, India}
\end{comment}
\date{}
\maketitle

\begin{abstract}
Evaluating the performance of researchers and measuring the impact of papers written by scientists is the main objective of citation analysis. Various indices and metrics have been proposed for this. In this paper, we propose a new citation index {\citex}, which gives normalized scores to authors and papers to determine their rankings. To the best of our knowledge, this is the first citation index which simultaneously assigns scores to both authors and papers. Using these scores, we can get an objective measure of the reputation of an author and the impact of a paper.

We model this problem as an iterative computation on a publication graph, whose vertices are authors and papers, and whose edges indicate which author has written which paper. We prove that this iterative computation converges in the limit, by using a powerful theorem from linear algebra. We run this algorithm on several examples, and find that the author and paper scores match closely with what is suggested by our intuition. The algorithm is theoretically sound and runs very fast in practice. We compare this index with several existing metrics and find that {\citex} gives far more accurate scores compared to the traditional metrics.
\end{abstract}
\textbf{Keywords:} \textsc{Citation Analysis, Graph Algorithms, Matrix Computations, Eigenvalues and Eigenvectors, Information Retrieval.}

\section{Introduction}
\label{sec:Introduction}
In today's world, numerous papers are written by authors in many journals and conferences. It is difficult for people to judge the quality and impact of an author or a paper, even if they are experts, just by reading a few papers. Thus, measuring the relative importance of authors and papers published in scientific conferences and journals is very important. More importantly, there is a need for an index giving accurate results, which can be computed easily for a large collection of authors and papers.

Many metrics are available to evaluate the importance of journals like Impact factor \cite{G55}, Immediacy index, 
citation page rank \cite{PN76}, and Y-factor \cite{BSSL05}. 
There are many metrics which give scores to authors. 
Most famous of these are Hirsch's h-index \cite{H05}, Individual h-index \cite{BCK06}, Egghe's g-index \cite{E06}, and Zhang's e-index \cite{Z09}.  
However, till now, the only way to evaluate the impact of a paper, is to count the number of citations. 
It has been observed that surveys and review articles receive more citations than high quality original research papers. 
Self citations also increase the number of citations of a paper.

This is why we propose a new metric for evaluating papers for the first time. This new citation index {\citex}, gives normalized scores to authors and papers to determine their rankings. Using these scores, we can get an objective measure of the reputation of an author and the impact of a paper. Paper scores are calculated not solely based on the number of citations. 
Apart from giving scores to papers, we give scores to authors. The author scores and paper scores reinforce each other. 
Thus, an influential author will increase the score of the paper (s)he writes. 
An author's score increases if (s)he writes a good paper. 
Since the author score increases the score of a paper written by the author, it will be a general tendency to 
write a paper with an influential author.
To prevent this, we assign scores to authors in such a way that the score of a paper gets uniformly divided by the 
number of people who have co-authored the paper. This basic model can be very easily extended to weighted distribution of scores, 
where a first author who has the highest contribution receives more weightage than an author who has less contribution.

Existing literature rate an author based on the number of papers that he/she has written, the total number of citations received, 
average number of citations per paper etc. In Section \ref{sec:Related Work}, we discuss each of these metrics and state their 
advantages and disadvantages. 
None of the existing techniques take into account how the paper scores of an author influence the author's standing in the academic community, 
because the paper score is calculated solely based on the number of citations. 
Our {\citex} index is inspired by the ideas of PageRank \cite{page1999pagerank,brin1998anatomy,bollen2006journal} and HITS \cite{K99} algorithms for ranking web pages. In this scheme, paper scores are updated depending on author scores and author scores are updated based on paper scores.  This is often referred to as the \emph{Principle of Repeated Improvement} \cite{easley2010networks}. We prove that the scores asymptotically converge in the limit when the number of iterations is large. In practice, the scores converge within a few iterations.

The main idea is to consider the authors and papers as a network with a disjoint set of nodes, with the set of authors and the set of papers as vertices 
in the two sets. An edge exists between an author and a paper, if the author has written the corresponding paper. 
Apart from these, there is a citation subgraph, consisting of papers as the vertices. A directed edge exists from node $i$ to node $j$, 
if paper $i$ cites paper $j$. At each step, a paper score gets uniformly divided amongst all its authors. 
The paper score is the sum of the scores of the authors who have written the paper and the scores due to citation.

The currently available scores like h-index count the number of citations, but not the impact of these papers which have cited this paper. 
This can be easily manipulated by self-citations. Moreover, since many indices like h-index and number of citations are integers, it is difficult to distinguish between two authors or papers with the same value of the index. Our index has a higher \emph{discriminatory power}, since it is a real number between 0 and 1. It is highly unlikely that two authors or papers will have the same value for {\citex}.
There is also a non-uniformity across various disciplines. 
In sciences, there is a general tendency to work in large groups. These papers also receive large number of citations compared 
to papers in computer science. Since the score is divided uniformly among multiple authors, each author score will
be reduced, thus affecting paper scores as well. 
Our index also gives credit to authors who write single-author papers. This should not deter people to do collaborative research. 

The problem can be fine-tuned in a number of ways. We can take into consideration the recommendation of authors by other authors and
consider weighted distribution. The problem also has a number of ramifications. 
A similar index can be designed for product-customer recommendation system, where customers can recommend each other depending 
upon the reputations (similar to citation of papers), and a customer can recommend a product (similar to writing papers). 
The difference here is that, the score of a product is not uniformly divided amongst customers, and information cascades have to be taken into 
account when calculating the product scores. Other interdependent networks, which reinforce each other, can be treated similarly.

The rest of the paper is organized as follows. In section \ref{sec:Related Work}, we discuss the related work on citation analysis, define the metrics and compare them. In section \ref{sec:Problem definition}, we define the problem and propose the model to analyze it. We give an informal description of the algorithm and present the rules to iteratively compute the author and paper scores in section \ref{sec:Informal description}. In section \ref{sec:Mathematical Analysis}, we mathematically analyze the iterative procedure and prove that it converges to the eigenvectors of certain matrices. In section \ref{sec:Illustrative examples}, we execute the algorithm on some illustrative examples, and show that the author scores and paper scores give good indication of their importance, as can be seen from the underlying graph structure. In section \ref{sec:Extensions}, we discuss some extensions of the basic algorithm and future direction to work on. We conclude the paper in section \ref{sec:Conclusion} with some future directions to work on.
 
\section{Related work}
\label{sec:Related Work}
\subsection{Different metrics used in citation analysis}
Previously, there have been several attempts to measure the impact of authors and papers. We list here some of them along with their definition.

\begin{table*}
\caption{Comparison between different citation metrics}
\begin{center}
\label{tab:citation-metrics}
\begin{tabular}{|p{4cm}|p{5cm}|p{7.5cm}|}
\hline
\textbf{Citation Metric} & \textbf{Advantage} & \textbf{Disadvantage} \\
\hline
\emph{Number of papers} & Measure of productivity. & Importance of papers not considered. \\
\hline
\emph{Number of citations} & Measures impact of an author. & A few highly cited papers increase the total. \\
& & Survey and review articles are cited more than original research papers. \\
& & Favors established authors. \\
\hline
\emph{Average number of citations per paper} & Allows comparison of scientists of different ages. & Rewards low productivity. \\
\hline
\emph{Average number of citations per author} & Measures impact of an author. & Difficult to distinguish between authors whose average is same, but citation patterns are different. \\
\hline
\emph{Average number of papers per author} & Measures average productivity & Does not measure impact of papers. \\
\hline
\emph{Average number of authors per paper} & Measures collaboration between authors. & Does not consider importance of authors and papers. \\
\hline
\emph{h-index} & Measures both the quality and quantity of scientific output. & Does not account for the number of authors of a paper. \\
& & Different fields with different number of citations will have different h-index. \\
& & Can be manipulated through self-citations. \\
\hline
\emph{g-index} & Gives more weight to highly-cited articles. & Unlike the h-index, the g-index saturates whenever the average number of citations for all published papers exceeds the total number of published papers. \\
\hline
\emph{e-index} & Differentiates between scientists with identical h-indices but different citations. & Can't be used independently. Must be used together with the h-index. \\
\hline
\emph{Number of papers with at least $c$ citations} & Measures the broad and sustained impact of an author. & Difficult to find the right value of $c$. Different values of $c$ favors different authors. \\
\hline
\emph{Number of citations to the $k$ most cited papers} & Identifies the most influential authors. & Not a single number, so difficult to compare. Different values of $k$ favors different authors. \\
\hline
\emph{Eigenfactor} & Takes into account impact of the citing papers in addition to the number of citations. & Does not give author scores. Importance of authors have to inferred indirectly from the papers (s)he has written. \\
\hline
\end{tabular}
\end{center}
\end{table*}

\begin{enumerate}
	\item \textbf{Number of papers ($N_p$):} Total number of papers written by an author.
	\item \textbf{Number of citations ($N_c$):} Total number of citations for all papers written by an author.
	\item \textbf{Average number of citations per paper:} Ratio of total number of citations and total number of papers, \emph{i.e.,} $\frac{N_c}{N_p}$. This is sometimes also called the \emph{impact factor}.
	\item \textbf{Average number of citations per author:} For each paper, its citation count is divided by the number of authors for that paper to give the normalized citation count for the paper. The normalized citation counts are then summed across all papers to give the average number of citations per author.
	\item \textbf{Average number of papers per author:} For each paper, the inverse of the number of authors gives the normalized author count for the paper. The normalized author counts are then summed across all papers to give the average number of papers per author.
	\item \textbf{Average number of authors per paper:} The sum of the author counts across all papers, divided by the total number of papers.
	\item \textbf{$h$-index \cite{H05}:} An author has index $h$, if $h$ of his $N_p$ papers have at least $h$ citations each, and the rest of the $N_p - h$ papers have no more than $h$ citations each.
	\item \textbf{$g$-index \cite{E06}:} Given a set of articles ranked in decreasing order of the number of citations that they received, the $g$-index is the (unique) largest number such that the top $g$ articles together received at least $g^2$ citations.
	\item \textbf{$e$-index \cite{Z09}:} It is the square root of surplus citations in the $h$-set beyond the theoretical minimum ($h^2$) required to obtain a $h$-index of $h$. It is useful for highly cited scientists and for comparing those with the same $h$-index but different citation patterns.
	\item \textbf{Number of significant papers:} Total number of papers with more than $c$ citations for some integer $c$.
	\item \textbf{Number of citations to the most cited papers:} Total number of citations to the $k$ most cited papers for some integer $k$.
	\item \textbf{Eigenfactor:} The Eigenfactor score \cite{bergstrom2008eigenfactor} is a rating of the total importance of a scientific journal. Journals are rated according to the number of incoming citations, with citations from highly ranked journals weighted to make a larger contribution to the Eigenfactor than those from poorly ranked journals \cite{bergstrom2007measuring}. As a measure of importance, the Eigenfactor score scales with the total impact of a journal. Journals generating higher impact to the field have larger Eigenfactor scores. However, it is not clear whether Eigenfactor gives better estimate than raw citation count. \cite{davis2008eigenfactor}
\end{enumerate}

\subsection{Comparison between different citation metrics}
In this section, we compare the different citation indices and state their advantages and disadvantages. The comparison is presented in Table \ref{tab:citation-metrics}.

\subsection{Comparison with similar works}
There are a number of previous attempts to rank authors and papers based on importance. The \textsc{SimRank} algorithm by Jeh and Widom \cite{jeh2002simrank} gives a measure of the similarity between two objects based on their relationships with other objects. Their basic idea is that two objects are similar if they are related to similar objects. Note that this only measures the similarity of two objects, not their relative ranking, so this is different from what we are trying to do. Zhou et. al. \cite{zhou2007co} proposed a method for co-ranking authors and their publications using several networks associated with authors and papers. Although there is some similarity between our algorithm and their approach, there are fundamental differences between the two. Their co-ranking framework is based on coupling two random walks that separately rank authors and documents using the \textsc{PageRank} algorithm. Our algorithm is designed from scratch and does not use the \textsc{PageRank} algorithm. Moreover, our algorithm is much simpler and the computations required is also far lesser than what is required in their method. Walker et. al. \cite{walker2007ranking} gave a new algorithm called \textsc{CiteRank}. The ranking of papers is based on a network traffic model, which uses a variation of the \textsc{PageRank} algorithm. A paper is selected randomly from the set of all papers with a probability that decays exponentially with the age of the paper. Chen et. al. \cite{chen2007finding} uses a \textsc{PageRank} based algorithm to assess the relative importance of all publications. Their goal is to find some exceptional papers or ``gems" that are universally familiar to physicists. Sun and Giles \cite{sun2007popularity} propose a popularity weighted ranking algorithm for academic digital libraries. They use the popularity of a publication venue and compare their method with the \textsc{PageRank} algorithm, citation counts and the HITS algorithm.

\subsection{Some structures in citation analysis}
\begin{itemize}
	\item \textbf{Collaboration graph:} This is a graph associated with the authors. The nodes of the graph are the authors. There is an undirected edge between two nodes, if the corresponding authors have written a paper together.
	\item \textbf{Citation graph:} This is a graph associated with the papers. The nodes of the graph are the papers. There is a directed edge from a paper to another paper, if the first paper has cited the second paper.
	\item \textbf{Publication graph:} This is a graph relating the authors with the papers. The nodes of the graph are the authors and the papers. There is an undirected edge between two nodes, if the author has written the paper.
\end{itemize}

\section{Problem definition and model}
\label{sec:Problem definition}
We have a set of $m$ \emph{authors} $\cA = \{a_1,\ldots,a_m\}$ and a set of $n$ \emph{papers} $\cP = \{p_1,\ldots,p_n\}$. We represent this by a \emph{publication graph} $G_P = (V_P, E_P)$, whose vertices are the set of authors and papers, \emph{i.e.}, $V_P = \cA \cup \cP$. There is an undirected edge between author $a_i$ and paper $p_j$, if author $a_i$ has written paper $p_j$. Note that this is a \emph{symmetric} relation, so the edges are \emph{undirected}. Since there are only edges between authors and papers, the publication graph is a \emph{bipartite graph}. Associated with this, there is an $m \times n$ \emph{publication matrix} $M$, whose rows and columns are $a_1,\ldots,a_m$ and $p_1,\ldots,p_n$ respectively, and whose $(i,j)^{th}$ entry $m_{ij} = 1$, if and only if author $a_i$ has written paper $p_j$.

Moreover, there is a \emph{citation graph} $G_C = (V_C, E_C)$ associated with the papers, whose vertices are the set of papers, \emph{i.e.}, $V_C = \cP$. There is a directed edge from paper $p_j$ to paper $p_k$, if paper $p_j$ has cited paper $p_k$. Note that this is an \emph{asymmetric} relation, so the edges are \emph{directed}. Associated with this, there is an $n \times n$ \emph{citation matrix} $C$, whose both rows and columns are $p_1,\ldots,p_n$, and whose $(j,k)^{th}$ entry $c_{jk} = 1$, if and only if paper $p_j$ has cited paper $p_k$. Note that the citation graph can't have any \emph{directed cycle}. This is because a paper can only cite a previously published paper, so they are \emph{totally ordered} in time. This also means that if the papers are numbered in \emph{decreasing order of time} (newer first), the resulting citation matrix will be \emph{upper-triangular}. An example of a publication graph and a citation graph is given in Figure \ref{citation-graph}.

\begin{figure}[ht]
\begin{center}
\includegraphics[width=2in]{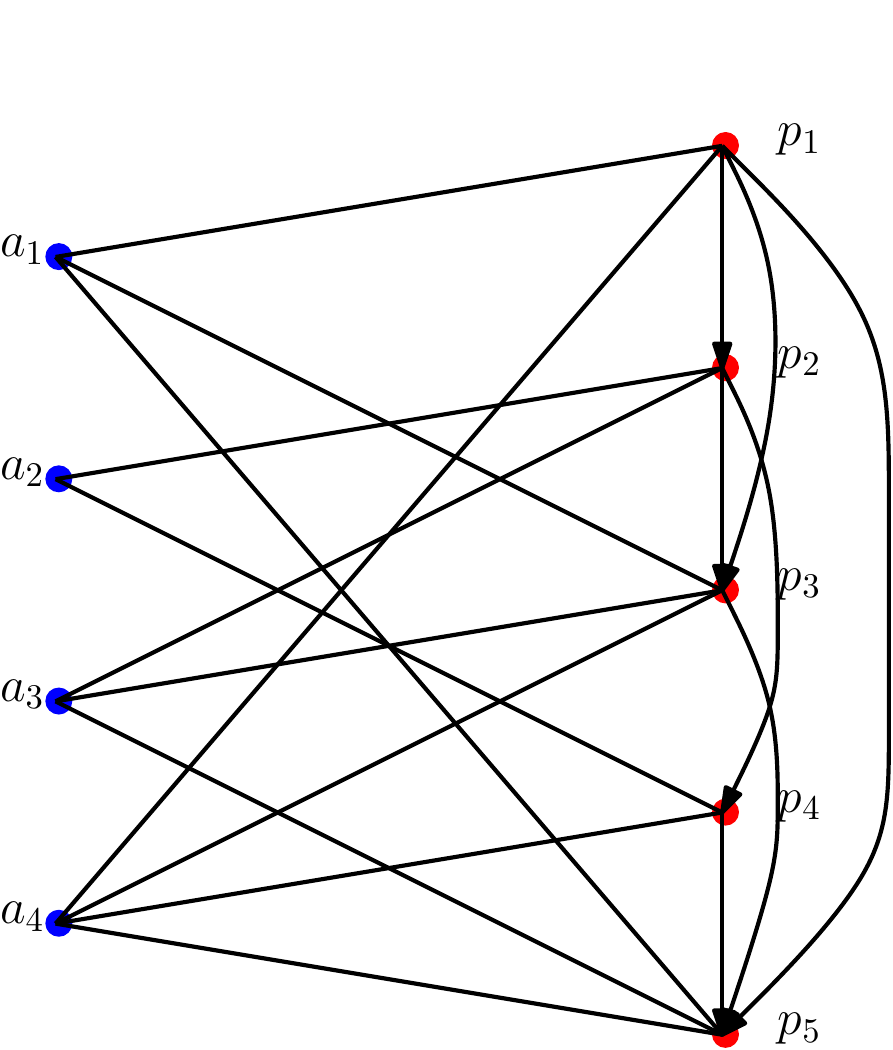}
\caption{The publication and citation graphs for Example 1 showing authors, papers and citations.}
\label{citation-graph}
\end{center}
\end{figure}
 
The following sets are important for further development.
\begin{enumerate}
	\item For an author $a$, $PAPERS(a)$ is defined as the set of papers written by author $a$. In other words, $PAPERS(a) = \{p \in \cP: (a,p) \in E_P\}$.
	\item For a paper $p$, $AUTHORS(p)$ is defined as the set of authors who have written paper $p$. In other words, $AUTHORS(p) = \{a \in \cA: (a,p) \in E_P\}$.
	\item For a paper $p$, $CITE(p)$ is defined as the set of papers who have cited paper $p$. In other words, $CITE(p) = \{q \in \cP: (q,p) \in E_C\}$.
	\item For a paper $p$, $REF(p)$ is defined as the set of papers which have been given as reference (cited) by paper $p$. In other words, $REF(p) = \{q \in \cP: (p,q) \in E_C\}$.
\end{enumerate}

Our goal is to assign scores to authors and papers using the structure of the publication and citation graphs, so that important authors and papers get higher scores.

%\section{Informal description of the algorithm}
\section{Description of the Citex index}
\label{sec:Informal description}
\subsection{Informal description of the algorithm}
In this section, we give an overview of our proposed algorithm. The algorithm maintains a set of author scores and paper scores, which are initially set to 1. This initial choice of scores is arbitrary, and the scores can be set to any nonzero value. Then we update the scores considering the relationship between authors and papers (who has authored which paper) and relationship between papers (which paper has cited which paper). This critically uses the publication graph and the citation graph. We use the \emph{Principle of Repeated Improvement} \cite{easley2010networks} to iteratively compute the new scores based on the previous scores. More specifically, the author scores for the next iteration is computed from the paper scores for the current iteration. The paper scores for the next iteration is computed from the author scores and the paper scores for the current iteration. In every iteration, we normalize the scores by dividing them by the sum of the individual scores, so that each of them lies between 0 and 1, and they add up to 1. We continue to do this till the author scores and the paper scores converge or a specified number of iterations have been completed. The \emph{Principle of Repeated Improvement} states that each improvement of author scores will lead to a further improvement of paper scores, and vice versa. The final author scores and paper scores are the measure of importance of the authors and the papers. The higher the score is, the higher is the impact of an author and a paper.

\subsection{Computing author and paper scores}
\label{sec:Computing scores}
For each author $a_i$, we have an \emph{author score} (\emph{a-score}) $x_i$, and for each paper $p_j$, we have a \emph{paper score} (\emph{p-score}) $y_j$. We represent the set of author scores as a column vector $\xx = (x_1,\ldots,x_m)^T$ and the set of paper scores as a column vector $\yy = (y_1,\ldots,y_n)^T$. We initialize all author and paper scores to one, \emph{i.e.}, $\xx = \yy = \mathbf{1}$. Then, we iteratively update the $a$-scores and $p$-scores using the following rules.

\begin{enumerate}
	\item For each paper $p_j$, its \emph{adjusted $p$-score} $\bar{y}_j$ is given by the $p$-score $y_j$ divided by the number of authors who have written the paper. In other words, $\bar{y}_j = \frac{y_j}{k}$, for $j=1,\ldots,n$, where $k = |AUTHORS(p_j)|$ is the number of co-authors of the paper $p_j$.
	\item For each author $a_i$, set his $a$-score $x_i$ to be the sum of the adjusted $p$-scores of all the papers that he has authored. In other words, $x_i = \sum_{j \in PAPERS(i)} \bar{y}_j$, for $i=1,\ldots,m$.
	\item For each paper $p_j$, set its $p$-score $y_j$ to be the sum of the $a$-scores of all the authors who have co-authored the paper $p_j$. In other words, $y_j = \sum_{i \in AUTHORS(j)} x_i$, for $j=1,\ldots,n$.
	\item For each paper $p_j$, add to its $p$-score $y_j$, the sum of the $p$-scores of all the papers who have cited the paper $p_j$. In other words, $y_j = y_j + \sum_{k \in CITE(j)} y_k$, for $j=1,\ldots,n$.
\end{enumerate}

We normalize the scores by dividing the author (paper) scores by the sum of the author (paper) scores, so that each score lies between 0 and 1, and the sum of the scores is 1.

\section{Mathematical analysis}
\label{sec:Mathematical Analysis}
\subsection{Analysis of author scores and paper scores}
We observe that the rule $y_j = \sum_{i \in AUTHORS(j)} x_i$ can be rewritten as $y_j = \sum_{i=1}^m m_{ij} x_i$, since $m_{ij} = 1$ if and only if $i \in AUTHORS(j)$. Consider the matrix-vector equation $\yy \leftarrow M^T \xx$. The $j^{th}$ row of this equation is $y_j = \sum_{i=1}^m m_{ij} x_i$. Hence, this matrix-vector equation succinctly encodes all $n$ scalar equations for $j=1,\ldots,n$.

The corresponding equation for $\xx$ is similar, but a little more involved. The equation $x_i = \sum_{j \in PAPERS(i)} \bar{y}_j$ can be written as $x_i = \sum_{j=1}^n m_{ij} \bar{y}_j$, since $m_{ij} = 1$ if and only if $j \in PAPERS(i)$. Let $\bar{\yy} = (\bar{y}_1,\ldots,\bar{y}_n)^T$. Consider the matrix-vector equation $\xx \leftarrow M \bar{\yy}$. The $i^{th}$ row of this equation is $x_i = \sum_{j=1}^n m_{ij} \bar{y}_j$. Hence, this matrix-vector equation succinctly encodes all $m$ scalar equations for $i=1,\ldots,m$. Further note that, $\bar{y}_j = \frac{y_j}{|AUTHORS(p_j)|} = \frac{y_j}{\sum_{i=1}^m m_{ij}}$. Hence, $x_i = \sum_{j=1}^n \left(\frac{m_{ij}}{\sum_{i=1}^m m_{ij}}\right) y_j = \sum_{j=1}^n w_{ij} y_j$, where $w_{ij} = \frac{m_{ij}}{\sum_{i=1}^m m_{ij}}$ is the \emph{weight} associated with the paper $p_j$. Now, $\xx$ can be written as $\xx \leftarrow W \yy$, where $W$ is the $m \times n$ \emph{weight matrix} whose $(i,j)^{th}$ entry is $w_{ij}$.

The equation $y_j = y_j + \sum_{k \in CITE(j)} y_k$ can be rewritten as $y_j = y_j + \sum_{k=1}^n c_{kj} y_k$, since $c_{kj} = 1$ if and only if $k \in CITE(j)$. This can be written as the matrix-vector equation $\yy \leftarrow (I + C^T) \yy$, where $I$ is the $n \times n$ \emph{identity matrix}.

\begin{comment}
Combining the last two equations, we get $\xx \leftarrow W (I + C^T) \yy$. Combining this with the first equation and eliminating $\yy$, we get $\xx \leftarrow W (I + C^T) M^T \xx$. Hence, $\xx$ is an eigenvector of the matrix $W (I + C^T) M^T$. Similarly, eliminating $\xx$ from these equations, we get $\yy \leftarrow (I + C^T) M^T W \yy$. Hence, $\yy$ is an eigenvector of the matrix $(I + C^T) M^T W$.
\end{comment}

Let the initial author vector and paper vector be $\xx^{\langle 0\rangle}$ and  $\yy^{\langle 0\rangle}$ respectively. If we start with the equation $\xx \leftarrow W \yy$, the successive iterations proceed as below.
\begin{align}
\xx^{\langle 1\rangle} &= W \yy^{\langle 0\rangle}, \\
\yy^{\langle 1\rangle} &= M^T \xx^{\langle 1\rangle} = M^T W \yy^{\langle 0\rangle}, \\
\yy^{\langle 1\rangle} &= (I + C^T) \yy^{\langle 1\rangle} = (I + C^T) M^T W \yy^{\langle 0\rangle}.
\end{align}

Similarly, if we start with the equation $\yy \leftarrow M^T \xx$, the successive iterations proceed as below.
\begin{align}
\yy^{\langle 1\rangle} &= M^T \xx^{\langle 0\rangle}, \\
\yy^{\langle 1\rangle} &= (I + C^T) \yy^{\langle 1\rangle} = (I + C^T) M^T \xx^{\langle 0\rangle}, \\
\xx^{\langle 1\rangle} &= W \yy^{\langle 1\rangle} = W (I + C^T) M^T \xx^{\langle 0\rangle}.
\end{align}

Proceeding similarly, at the $k$-th iteration the author and paper vectors are given by, 
\begin{align}
\xx^{\langle k\rangle} &= [W (I + C^T) M^T]^k \xx^{\langle 0\rangle}, \label{x-iter} \\
\yy^{\langle k\rangle} &= [(I + C^T) M^T W]^k \yy^{\langle 0\rangle}. \label{y-iter}
\end{align}

\subsection{Proof of convergence of author scores and paper scores}
In this section, we will prove the following theorem.
\begin{theorem}
The sequences $\xx^{\langle k\rangle}$ and $\yy^{\langle k\rangle}$, $k = 0, 1, 2, \ldots$ converge to the limits $\xx^\star$ and $\yy^\star$ respectively. Moreover, $\xx^\star$ is the principal eigenvector of the matrix $W (I + C^T) M^T$ and $\yy^\star$ is the principal eigenvector of the matrix $(I + C^T) M^T W$. Further, both $\xx^\star$ and $\yy^\star$ are non-negative and non-zero vectors.
\end{theorem}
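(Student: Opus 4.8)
The plan is to read the recurrences \eqref{x-iter} and \eqref{y-iter} as power iterations and to feed them into the Perron--Frobenius theorem (the ``powerful theorem from linear algebra'' alluded to earlier). Write $A = W(I + C^T)M^T$ and $B = (I + C^T)M^T W$, so that $\xx^{\langle k\rangle} = A^k \xx^{\langle 0\rangle}$ and $\yy^{\langle k\rangle} = B^k \yy^{\langle 0\rangle}$, where — since the algorithm rescales after every step — each iterate is understood to be divided by the sum of its entries, equivalently by its $\ell_1$ norm. The first step is to observe that $A$ and $B$ are entrywise non-negative: $M$, the weight matrix $W$, the citation matrix $C$, and $I$ all have non-negative entries, and sums and products of non-negative matrices are non-negative; in particular the per-step normalization is well defined as long as no iterate is the zero vector, which I would check separately below. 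I would also record the elementary identity that $A$ and $B$ are $PQ$ and $QP$ for $P = W$ and $Q = (I + C^T)M^T$, so they have the same non-zero eigenvalues, $\rho(A) = \rho(B)$, and the limits will satisfy $\yy^\star \propto (I + C^T)M^T \xx^\star$ and $\xx^\star \propto W \yy^\star$, which ties the two halves of the statement together.

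Next I would show that $A$ is primitive. Its diagonal is strictly positive: for each author $a_i$ there is at least one paper $p_j \in PAPERS(a_i)$, and that paper contributes $w_{ij} m_{ij} = 1/|AUTHORS(p_j)| > 0$ to $A_{ii}$. For irreducibility, consider the directed graph on $\{a_1,\dots,a_m\}$ whose arcs are the positive entries of $A$: there is an arc $a_i \to a_{i'}$ whenever $a_i$ and $a_{i'}$ co-authored a paper, or $a_i$ wrote a paper that is cited by a paper of $a_{i'}$. Assuming this graph is strongly connected (equivalently, the collaboration graph together with the citation-induced links is connected), $A$ is irreducible, and an irreducible non-negative matrix with a positive diagonal entry is primitive. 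The Perron--Frobenius theorem for primitive matrices then gives that $\rho(A) > 0$ is a simple eigenvalue, strictly dominant in modulus, with an essentially unique eigenvector $\xx^\star$ that is strictly positive, and that for every non-negative non-zero vector $v$ the normalized iterates $A^k v / \|A^k v\|_1$ converge to $\xx^\star / \|\xx^\star\|_1$. Applying this with $v = \xx^{\langle 0\rangle} = \mathbf{1}$ yields convergence of $\xx^{\langle k\rangle}$ to the (normalized) principal eigenvector of $A$; the identical argument with $B$ and $\yy^{\langle 0\rangle} = \mathbf{1}$ handles $\yy^{\langle k\rangle}$. Non-negativity and non-vanishing of $\xx^\star$ and $\yy^\star$ are immediate, each being a positive multiple of a strictly positive Perron vector; and every intermediate iterate is non-zero because $\mathbf{1} > 0$ and neither $A$ nor $B$ has a zero row on the relevant support, so the normalization never divides by zero.

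The main obstacle is the irreducibility hypothesis. If the collaboration-plus-citation graph is disconnected, $A$ is reducible, its eigenvalue of largest modulus may be repeated or shared among several blocks, and then the power iteration need not converge to a single vector independent of the starting point; the limit, when it exists, is a combination of the Perron vectors of the dominant components. I would handle this either by adopting connectivity as a standing assumption, by running the argument component-by-component on the strongly connected pieces, or — in the spirit of the PageRank analogy the paper already invokes — by adding a small uniform ``teleportation'' perturbation that makes $A$ and $B$ strictly positive and hence automatically primitive, at the cost of a negligible change in the scores. A secondary point deserving an explicit line is aperiodicity: an irreducible but periodic non-negative matrix produces oscillating (non-convergent) iterates, so the positive-diagonal observation is doing genuine work and should not be skipped.
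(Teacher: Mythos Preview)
Your approach is essentially the paper's: recognize the recurrences as power iterations on the non-negative matrices $W(I+C^T)M^T$ and $(I+C^T)M^TW$ and invoke Perron--Frobenius to obtain convergence to the principal eigenvectors. You are in fact more careful than the paper itself, which simply quotes a version of Perron--Frobenius that packages the convergence clause and applies it directly, without checking non-negativity, primitivity, or the reducible case; your positive-diagonal observation, the $PQ$/$QP$ spectral identity, and the explicit discussion of irreducibility fill gaps the paper's proof leaves open.
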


\begin{proof}
From the above discussion we have,
$\xx^{\langle k\rangle} = P^k \xx^{\langle 0\rangle}$ and $\yy^{\langle k\rangle} = Q^k \yy^{\langle 0\rangle}$, where $P = W (I + C^T) M^T$ and $Q = (I + C^T) M^T W$. Note that $P$ is an $m \times m$ square matrix, whereas $Q$ is an $n \times n$ square matrix. Moreover, $\xx^{\langle k+1\rangle} = P^{k+1} \xx^{\langle 0\rangle} = P \cdot P^k \xx^{\langle 0\rangle} = P \xx^{\langle k\rangle}$.
If the author score $\xx^{\langle k\rangle}$ converges to the vector $\xx^{\star}$ in the limit when $k \rightarrow \infty$, then this vector should satisfy $P \xx^{\star} = \xx^{\star}$. 
This means that $\xx^{\star}$ is an eigenvector of $P$, with the corresponding eigenvalue being 1. Similarly, if the paper score $\yy^{\langle k\rangle}$ converges to the vector $\yy^{\star}$ in the limit when $k \rightarrow \infty$, then $\yy^{\star}$ must be an eigenvector of $Q$, with the corresponding eigenvalue being 1.

To prove that a non-negative eigenvalue and a non-negative eigenvector exists, we use the following theorem from linear algebra.

\begin{theorem}[\textsc{Perron-Frobenius Theorem}] \cite{perron1907theorie, frobenius1912matrizen, easley2010networks}
Let $A = (a_{ij})$ be an $n \times n$ non-negative matrix, meaning that $a_{ij} \geq 0, \forall i,j: 1\leq i,j \leq n$. Then the following statements hold.
\begin{enumerate}
\item $A$ has a real eigenvalue $c \geq 0$ such that $c > |c'|$ for all other eigenvalues $c'$.
\item There is an eigenvector $v$ with non-negative real components corresponding to the largest
eigenvalue $c: Av = cv, v_i \ge 0, 1 \leq i \leq n$, and $v$ is unique up to multiplication by a constant.
\item If the largest eigenvalue $c$ is equal to $1$, then for any starting vector $\xx^{\langle 0\rangle} \neq 0$ with non-negative components, the sequence of vectors $A^k \xx^{\langle 0\rangle}$ converge to a vector in the direction of $v$ as $k \rightarrow \infty$.
\end{enumerate}
\end{theorem}

Thus, by the Perron-Frobenius theorem, the author and paper scores both converge to unique non-negative vectors $\xx^{\star}$ and $\yy^{\star}$, after repeated applications of the update rules. These two vectors are the limiting values of the author and paper scores. Moreover, none of the vectors $\xx^{\star}$ and $\yy^{\star}$ can be the zero vector. At least one of their components must be non-zero, because the initial vectors $\xx^{\langle 0\rangle}$ and $\yy^{\langle 0\rangle}$ are the all-1 vectors, and at each iteration the vectors are normalized. So the sum of their components add up to 1.
\end{proof}

\subsection{Time-complexity of each iteration}
We have to multiply some matrices as can be seen from equations (\ref{x-iter}) and (\ref{y-iter}). Computing the product of a $m \times n$ matrix and a $n \times p$ matrix requires $O(mnp)$ time. Computing the matrices $W (I + C^T) M^T$ and $(I + C^T) M^T W$ takes $O(mn(m+n))$ time each. Hence, each iteration can be done in $O(mn(m+n))$ time.

\section{Experimental analysis}
\label{sec:Illustrative examples}
We compute the author and paper scores for some graphs and show that they match with our intuition. 

\subsection{Example 1}
\label{sec:Example 1}
For the graph in Figure \ref{citation-graph}, the author and paper vectors, publication matrix and citation matrix are given below. Note that for this example, $m = 4, n = 5$.

\begin{equation*}
\xx = [1, 1, 1, 1], \yy = [1, 1, 1, 1, 1],
\end{equation*}
\begin{equation*}
M =
\begin{bmatrix}
1 & 0 & 1 & 0 & 1 \\
0 & 1 & 0 & 1 & 0 \\
0 & 1 & 1 & 0 & 1 \\
1 & 0 & 1 & 1 & 1
\end{bmatrix}
, C =
\begin{bmatrix}
0 & 1 & 1 & 0 & 1 \\
0 & 0 & 1 & 1 & 0 \\
0 & 0 & 0 & 0 & 1 \\
0 & 0 & 0 & 0 & 1 \\
0 & 0 & 0 & 0 & 0
\end{bmatrix}
.
\end{equation*}

The author scores after 10 iterations are given below. These scores converge (up to 3 decimal places) as there is no change between two successive iterations.
\begin{align*}
\xx &= [0.259,0.132,0.289,0.320], \\
\yy &= [0.082,0.141,0.264,0.123,0.390].
\end{align*}

Intuitively it is clear that author $a_4$ and paper $p_5$ should get the highest scores. Author $a_4$ has written 4 papers $p_1,p_3,p_4,p_5$ and some of them have high paper scores. Similarly, $p_5$ has been written by 3 authors $a_1,a_3,a_4$ and cited by 3 papers $p_1,p_3,p_4$, some of which have high scores. $a_2$ gets the lowest score as it has written only 2 papers $p_2,p_4$, none of which have high paper scores. Similarly, $p_1$ gets the lowest score since it has no citations, although it has two authors $a_1,a_4$.

\begin{figure}[ht]
\begin{center}
\includegraphics[width=2in]{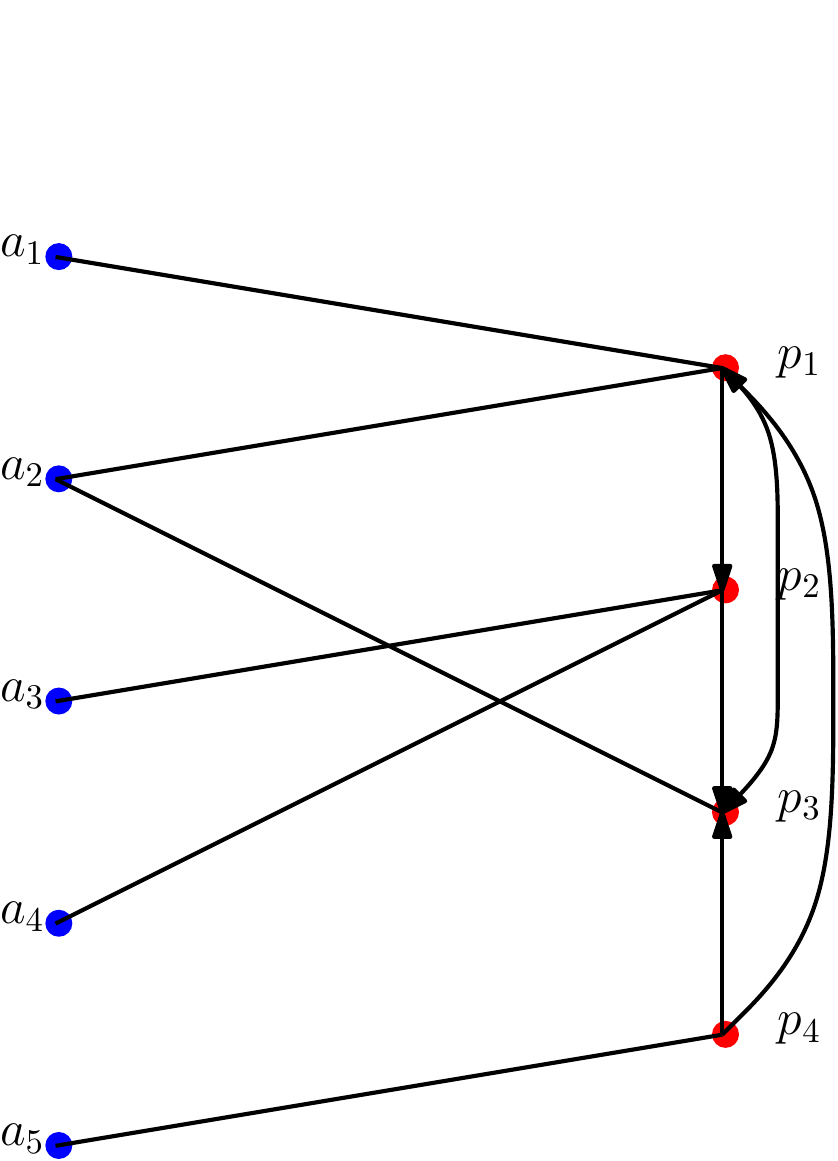}
\caption{The publication and citation graphs for Example 2 showing authors, papers and citations.}
\label{citation-graph2}
\end{center}
\end{figure}
 
\subsection{Example 2}
\label{sec:Example 2}
For the graph in Figure \ref{citation-graph2}, the author and paper vectors, publication matrix and citation matrix are given below. Note that for this example, $m = 5, n = 4$.

\begin{equation*}
\xx = [1, 1, 1, 1, 1], \yy = [1, 1, 1, 1],
\end{equation*}
\begin{equation*}
M =
\begin{bmatrix}
1 & 0 & 0 & 0 \\
1 & 0 & 1 & 0 \\
0 & 1 & 0 & 0 \\
0 & 1 & 0 & 0 \\
0 & 0 & 0 & 1
\end{bmatrix}
, C =
\begin{bmatrix}
0 & 1 & 1 & 0 \\
0 & 0 & 1 & 0 \\
0 & 0 & 0 & 0 \\
1 & 0 & 1 & 0
\end{bmatrix}
.
\end{equation*}

The author scores after 10 iterations are given below. These scores converge (up to 3 decimal places) as there is no change between two successive iterations.
\begin{align*}
\xx &= [0.106,0.590,0.152,0.152,0.000], \\
\yy &= [0.212,0.304,0.484,0.000].
\end{align*}

Intuitively it is clear that author $a_2$ and paper $p_3$ should get the highest scores. Author $a_2$ has written two papers $p_1$ and $p_3$. In turn, $p_3$ has been cited by papers $p_1, p_2$ and $p_4$. On the other hand, $p_4$ gets the lowest paper score 0, since it is not cited by any paper. $a_5$ gets the lowest author score since it has only written paper $p_4$, which has a score 0. These scores matches with our intuition.

\begin{comment}
\subsection{{\citex} on real data}
We have done extensive experiments on real data of authors and papers. We found that {\citex} performs very well in practice. Due to space limitations, we could not include it in the paper. We are planning to include this in the journal version.
\end{comment}

\section{Extensions}
\label{sec:Extensions}
\subsection{Weights on edges of the publication graph}
In Section \ref{sec:Mathematical Analysis} we assumed that all authors contribute equally to a paper. 
However, this is not true in practice. Different co-authors have different contribution to a paper. 
We can easily incorporate this feature in our {\citex} index, by slightly modifying the publication matrix $M$.
In Section \ref{sec:Mathematical Analysis}, $M$ was a 0-1 matrix. 
For the weighted version, we construct a matrix $N$, where edge weight $n_{ij}$ denotes the contribution of author $i$ in writing paper $j$. 
An author having higher contribution has higher weight, compared to an author having lesser contribution. 
The new matrix $N$  might not be a 0-1 matrix. 
$W'$ is the matrix corresponding to $W$. 
Hence, $x_i = \sum_{j=1}^n \left(\frac{n_{ij}}{\sum_{i=1}^m n_{ij}}\right) y_j = \sum_{j=1}^n w'_{ij} y_j$, where $w'_{ij} = \frac{n_{ij}}{\sum_{i=1}^m n_{ij}}$ is the weight associated with the paper $p_j$. Now, $\xx$ can be written as $\xx \leftarrow W' \yy$, where $W'$ is the weight matrix whose $(i,j)^{th}$ entry is $w'_{ij}$.

Now the author and paper vectors at the $k$-th iteration can be written as,
\begin{align*}
\xx^{\langle k\rangle} &= [W' (I + C^T) N^T]^k \xx^{\langle 0\rangle}, \\
\yy^{\langle k\rangle} &= [(I + C^T) N^T W']^k \yy^{\langle 0\rangle}.
\end{align*}

\subsection{Reputation of authors}
Our {\citex} index can be modified to include the reputation of authors. Each author can rank other authors who he/she believes has done original, ground breaking work.
We define the \emph{author reputation graph} $G_R = (V_R, E_R)$, similar to the citation graph. The vertices are the set of authors, \emph{i.e.}, $V_R = \cA$. Thus, the number of vertices in this graph is $m$.  
A directed edge from node $i$ to node $j$ of weight $r_{ij}$ exists, if author $i$ has rated author $j$ with a score $r_{ij}$. 
The \emph{author reputation matrix} $R$ is defined similarly.   
For an author $a$, $REP(a)$ is defined as the set of authors who have ranked author $a$. 
In other words, $REP(a) = \{b \in \cA: (b,a) \in E_R\}$.

On incorporating the reputation matrix, another rule is added to the list. 
For each author $a_i$, add to its $a$-score $x_i$, the sum of the $a$-scores of all the authors who have rated author $a_i$. In other words, $x_i = x_i + \sum_{k \in REP(i)} x_k = x_i + \sum_{k=1}^m r_{ki} x_k$, for $i=1,\ldots,m$.

Let the initial author vector and paper vector be $\xx^{\langle 0\rangle}$ and  $\yy^{\langle 0\rangle}$ respectively. If we start with the equation $\xx \leftarrow W \yy$, the successive iterations proceed as below.
\begin{align*}
\xx^{\langle 1\rangle} &= W \yy^{\langle 0\rangle}, \\
\xx^{\langle 1\rangle} &= (I + R^T) \xx^{\langle 1\rangle} = (I + R^T) W \yy^{\langle 0\rangle}\\
\yy^{\langle 1\rangle} &= M^T \xx^{\langle 1\rangle} = M^T (I + R^T) W \yy^{\langle 0\rangle}, \\
\yy^{\langle 1\rangle} &= (I + C^T) \yy^{\langle 1\rangle} = (I + C^T) M^T (I + R^T) W \yy^{\langle 0\rangle}.
\end{align*}

Proceeding similarly, at the $k$-th iteration the author and paper vectors are given by, 
\begin{align*}
\xx^{\langle k\rangle} &= [(I + R^T) W (I + C^T) M^T]^k \xx^{\langle 0\rangle}, \\
\yy^{\langle k\rangle} &= [(I + C^T) M^T (I + R^T) W]^k \yy^{\langle 0\rangle}.
\end{align*}

\subsection{Evaluation of journals and conferences}
\label{sec:Journals}
One important question is how to measure the quality of scientific journals and conferences. If we have the author scores and paper scores of all authors and papers published in a journal/conference, we can use the average author score and the average paper score as a metric for determining the quality of the journal/conference.

\section{Conclusion and future directions}
\label{sec:Conclusion}
In this paper, we have proposed a new citation metric {\citex} to judge the quality of authors and papers in scientific publications. {\citex} assigns scores to authors and papers, with higher scores indicating more importance, by analyzing the link structures in the publication graph and the citation graph. We also considered some extensions to the basic scheme. Here are some future directions to work on.
\begin{itemize}
	\item In a real-world scenario, authors and papers will be added over time. Dynamically modifying the scores from the current scores in an incremental fashion is a challenging problem.
	\item Applying this framework to the setting of customer recommendation of products will be an interesting idea. Here the nodes of the graph are customers and products. A customer can give a score to a product, which is like the weighted version of the publication graph.
	\item This is an example of an \emph{interdependent network}, where there are two graphs -- the collaboration/reputation graph and the citation graph. In addition, there is a publication graph which records the cross-edges between the nodes in the two graphs. Extending {\citex} to other interdependent networks will be an interesting direction to think about.
	\item Using further parameters such as time of publication and age of authors, in addition to the link structure will require further thoughts.
	\item Can this technique be extended to \emph{directly} assign scores to journals and conferences, rather than doing it \emph{indirectly}, as in section \ref{sec:Journals}?
	\item The analytical power of eigenvector-based methods is not yet fully understood. It would be interesting to pursue this question in the context of the algorithm presented here. Considering random graph models that contain enough structure to capture certain global properties of the model is a promising direction.
\end{itemize}

\bibliographystyle{plain}
\bibliography{Citex}

\end{document}